\begin{document}

\title{A Framework for Distributed Discrete Evacuation Strategies}

%\author[1]{Piotr Borowiecki}
%\ead{p.borowiecki@issi.uz.zgora.pl}
%%\orcid{0000-0002-5239-6540}
%
%\affiliation[1]{
%organization={Institute of Control and Computation Engineering, University of Zielona G\'ora},
%city={Zielona G\'ora},
%country={Poland}
%}
%
%\author[2]{Dariusz Dereniowski}
%%\authornotemark[1]
%%\orcid{0000-0003-4000-4818}
%\ead{deren@eti.pg.edu.pl}
%
%\affiliation[2]{
%    organization={Gda\'nsk University of Technology},
%    city={Gda\'nsk},
%    country={Poland}
%}
%
%\author[3]{{\L}ukasz Kuszner}
%%\orcid{0000-0003-1902-7580}
%\ead{lukasz.kuszner@ug.edu.pl}
%
%\affiliation[3]{
%    organization={Institute of Informatics, University of Gda\'nsk},
%    city={Gda\'nsk},
%    country={Poland}
%}

\author{
     Piotr Borowiecki{\footnotemark[1]}
\and Dariusz Dereniowski{\footnotemark[2]} 
\and {\L}ukasz Kuszner{\footnotemark[3]}
}

\date{}

\maketitle

\def\thefootnote{\fnsymbol{footnote}}

\footnotetext[1]{Institute of Control and Computation Engineering, University of Zielona G\'ora, Poland}

\footnotetext[2]{Gda\'{n}sk~University~of~Technology, Poland}

\footnotetext[3]{Institute of Informatics, University of Gda\'nsk, Poland}

% ----------------------------------------------------------

\newtheorem{theorem}{Theorem}[section]
\newtheorem{lemma}[theorem]{Lemma}
\newtheorem{proposition}[theorem]{Proposition}

\newcommand{\loc}[2][]{\if!#1! \pi_{#2}\else \pi_{#2}(#1)\fi}
\newcommand{\B}{B}
\newcommand{\OPT}{OPT}
\newcommand{\dist}[2]{\mathrm{dist}(#1,#2)}
\newcommand{\agents}{\mathcal{A}}
\newcommand{\bigo}{\mathcal{O}}   
\newcommand\YES{\mbox{\normalfont\textsc{yes}}}
\newcommand\NO{\mbox{\normalfont\textsc{no}}}
\newcommand{\problemEvac}{\mbox{\normalfont\textsc{Evac}}}
\newcommand{\ZG}{\mathcal{G}}
\newcommand{\ZGV}{\mathcal{V}}
\newcommand{\ZGE}{\mathcal{E}}
\newcommand{\cZ}{\mathcal{Z}}
\newcommand{\cS}{\mathcal{S}}

\begin{abstract}
In this paper, we study discrete evacuation in networks, where agents know the network topology and designated exit nodes but do not know the number and initial positions of other agents. Each agent initially occupies a distinct node and must reach any exit node.
Operating in a synchronous distributed model with local communication, the agents aim to minimize the time when the last agent reaches an exit.
We introduce a general algorithmic framework for constructing evacuation strategies on arbitrary graphs. 
As a key application, we demonstrate that the framework yields asymptotically optimal evacuation strategies---achieving a constant competitive ratio---for grid networks, with natural extensions to triangular and hexagonal grids.
\end{abstract}

\medskip
\noindent\textbf{Keywords}: evacuation, grid, distributed computing, mobile agents

% ================================================================
\section{Introduction}

Evacuation is a fundamental coordination task for mobile agents operating in networks under a distributed model.
While most existing research has focused on continuous, flow-like, or geometric models, the \emph{discrete evacuation} problem---where agents traverse vertices of a graph and must reach exits relying only on local communication and limited initial information---has received considerably less systematic attention from the perspective of distributed algorithms.
This paper addresses this gap by showing that discrete evacuation admits a unifying \emph{distributed framework} with algorithmic guarantees on evacuation time.

%--------------------------------------------------
\subsection{Motivation} \label{sec:motivation}

From a theoretical perspective, the centralized (offline) version of discrete evacuation can be solved efficiently via network flow techniques~\cite{BorDasDerKusz25}. 
This raises a natural question of how much harder does the problem become when agents operate in a distributed (online) setting with limited information. 
We address this by comparing distributed strategies---where agents \emph{neither know the number nor the initial positions of others}---to their centralized counterparts, highlighting the sensitivity of evacuation to input knowledge. 
In our model, agents \emph{know the graph and exit locations} in advance.
Without this a priori information, the task generalizes classical distributed problems such as treasure hunt (see, e.g., Bouchard et al.~\cite{BouchardDLP21}) and graph exploration (see, e.g., Kolendarska et al.~\cite{KolenderskaKMZ09}), for which constant-competitive algorithms are known not to exist. A similar barrier arises when agents know the graph but not the exit locations.

Beyond theoretical considerations, our distributed approach eliminates the unrealistic assumption that a central planner knows the exact distribution of evacuees a priori---a premise that often fails in modern environments such as high-rise buildings, multilevel transport systems, or sports arenas, where population density and structural complexity make outcomes difficult to predict (see, e.g., Sun et al. \cite{HHZR24}). This is particularly relevant in modern decentralized multi-agent architectures, where evacuation strategies must be computed \emph{on the fly} by autonomous entities (see, e.g., Avil{\'e}s et al. \cite{MoralTK14}).
Such trends highlight the growing relevance of distributed evacuation in human and robotic applications.

%------------------------------------------------------------
\subsection{Related Work} \label{sec:related-work}

A recurring theme in the literature on evacuation problems is that of terrain evacuation, commonly modeled either as continuous (geometric) or discrete (see, e.g., a survey by Sun et al.~\cite{HHZR24}). The continuous line of work, initiated (in distributed form) by Chrobak et al.~\cite{CGGM15} for evacuation of collaborative agents from a line, and extended to multiple rays by Brandt et al.~\cite{BrandtFRW20}, has been extensively studied on planar domains such as discs and related geometric settings, see, e.g.,~\cite{BrandtLLSW17,CGKNOV15,CGGKMP14,DisserS19,GeorgiouLLK23,PattanayakR0S18}. By contrast, the distributed \emph{discrete} evacuation problem studied here has received very limited attention: to our knowledge, the only prior work introducing essentially the same distributed discrete model is the paper by Borowiecki et al.~\cite{BorDasDerKusz25}.

Discrete graph models are natural for environments obtained by discretizing Euclidean space into small cells and representing adjacency via lattice or grid graphs. 
Graph‑based abstractions of terrain and agent coordination are widely used, see, e.g.,~\cite{AltshulerYWB11,BhadauriaKIS12,KaraivanovMSV14,Markov_2016}. 
In particular, two-dimensional grids and their subgraphs (so called partial grids) naturally discretize polygons and other planar searching environments (see, Dereniowski et al.~\cite{DereniowskiO19}). 
Additionally, other topologies are used to model two-dimensional space. For example, a triangular lattice or hexagonal grids are commonly used in the area of programmable matter, see, e.g.,~\cite{Chalopin0K24,DaymudeGHKSR20, HinnenthalLS24,LunaFSVY20,NavarraPBT23}. 
In this context, we note that our approach for grids can be analogously applied to partial grids as well, but determining whether partial grids admit constant‑competitive \emph{evacuation} strategies remains an intriguing open question.

While we focus on static graphs, related work has also examined evacuation and motion planning over \emph{temporal graphs}, which capture time‑varying connectivity and constraints, see, e.g., Akrida et al.~\cite{AkridaCGKS19}, and Erlebach et al.~\cite{ErlebachS23}. This perspective is complementary to ours and could be combined with our framework in future work.

Heterogeneous agent coordination has been explored for search tasks (see, Dereniowski et al.~\cite{DereniowskiKO21}) but also for evacuation problem---see, Borowiecki et al.~\cite{BorDasDerKusz25}, where each type of agent can access only a certain subgraph of the original graph (the problem is NP-hard even if there are only two types of agents, and also if the optimal evacuation time is a small constant).
Evacuation and collaborative search are closely related, but have distinct objectives. Evacuation typically minimizes the time until the \emph{last} agent reaches an exit, with the number of agents given as part of the input. In contrast, collaborative search (see, e.g., Alspach ~\cite{Alspach04}) focuses on minimizing the time for the \emph{first} agent to find a target, often treating the number of agents as an optimization parameter; see also Kappmeier~\cite{Kappmeier15} for alternative objectives such as maximizing rescues within a deadline. 

Most relevant to a \emph{distributed 
setting} in this work, Borowiecki et al.~\cite{BorDasDerKusz25} formalized the distributed discrete evacuation model and proved that there does not exist a~$(2-\Theta(1/k))$-competitive algorithm for evacuating $k$ agents on trees, while also establishing a constant‑competitive strategy for trees.

%--------------------------------------------------
\subsection{Our Contributions} \label{sec:contribution}

In this paper, we consider a distributed model in which the agents know the graph structure and exit locations but lack information regarding their quantity and initial placement. The objective is to minimize the time until the last agent reaches an exit using local, synchronous communication.

We introduce an algorithmic framework for constructing evacuation strategies for arbitrary graphs and establish a general formula (Theorem~\ref{thm:EvacuationTime}) for the evacuation time achieved by any strategy within this framework.
In terms of lower bounds, we prove that any evacuation strategy restricted to a single spanning tree is inherently limited, establishing a lower bound of $\Omega(\sqrt[5]{N})$ on the competitive ratio, where $N$ is the order of a graph (Proposition~\ref{prop:spanning-tree}). 

As a primary application, we show that the proposed framework yields $\bigo(1)$-compe\-titive distributed strategies for grid graphs (Theorem~\ref{thm:grids}). Our approach naturally extends to triangular and hexagonal grids.

Notably, several concepts introduced in the framework---such as the $\B$-partitions, zone graphs, and inter-zone coordination mechanisms---provide a versatile algorithmic toolkit amenable for adaptation to other distributed problems.

%--------------------------------------------------
\subsection{Paper Outline}

The paper is organized as follows. 
Section~\ref{sec:model} formalizes the discrete evacuation model and specifies assumptions regarding agents, communication, and movement.
Sections \ref{sec:problem} and \ref{sec:groups} give a formal problem statement and establish basic properties that enable agent grouping and coordination.
A lower bound for strategies following spanning trees is given in Section~\ref{sec:sptrees}.
In Section~\ref{sec:framework} we develop our evacuation framework, including the \mbox{epoch-phase} structure and the associated algorithms.
Section \ref{sec:framework:analysis} analyzes the framework and proves the evacuation time bound given in Theorem~\ref{thm:EvacuationTime}.
Section~\ref{sec:grids} applies the framework to grids, showing that it yields $\bigo(1)$‑competitive distributed strategies. 
Section~\ref{sec:conclusion} summarizes our contributions and discusses open problems and directions for future research.

% ========================================================================
\section{The Distributed Evacuation Model}
\label{sec:model}

We adopt the distributed evacuation model of~\cite{BorDasDerKusz25}. The input is a finite, simple, undirected graph $G=(V,E)$ with the vertex set $V$, the edge set $E$, and two distinguished nonempty vertex subsets $H$ and $X$, called \emph{homebases} and \emph{exits}, respectively.
Initially, each of the $k$ agents resides in one of the homebases in $H$ such that no two agents occupy the same homebase (i.e., $|H|=k$ and $H\cap X=\emptyset$).
Each agent a priori knows the graph (i.e., it has an isomorphic copy of $G$ in memory), knows the set $X$ of exits and its current position. However, no agent knows $k$ or the set $H$ of homebases.

The model is synchronous, i.e., the time taken by the evacuation process is divided into \emph{steps} of unit duration. 
In each step, each agent: first communicates with the other agents, then performs local computations, and finally decides on its action/movement.

%--------------------------------------------------
\subsection{Communication within a Single Step}

In the model, each agent can directly communicate with another agent when they are located at adjacent vertices, or at the vertices having a common neighbor.
We refer to such communication as \emph{direct}.

The number of messages two agents may exchange at the beginning of a step is finite, but we do not assume any specific bound.
In consequence, if there are two agents located at vertices $u$ and $v$ at a distance greater than $2$, and there is a $uv$-path $P$ whose internal vertices are \emph{occupied} in the sense that every vertex is either occupied by an agent or has both neighbors on the path occupied, then $u$ and $v$ may communicate at the beginning of the step through multiple message exchanges done by the agents located on $P$. This enables gossip propagation along occupied paths. 
We refer to such communication as \emph{indirect}.

% ---------------------------------------------------------
\subsection{Computation and Movement within a Single Step}
\label{sec:compandmove}

After the communication phase, each agent performs a local computation based on its \emph{current state} that is, its memory contents and location.
Such local computation results in determining the \emph{action} performed by the agent in the given step, which is one of the following:

\begin{enumerate}[label=(\Alph*)]
 \item \label{action:1} the agent \emph{moves}, i.e., changes its location from the currently occupied vertex $v$ to one of its neighbors $u$, or
 \item \label{action:2} the agent \emph{stays still}, which means that it remains at the currently occupied vertex.
\end{enumerate}

We assume that all moves occur simultaneously, and no two agents may occupy the same vertex once the moves occur.
More precisely, in any step, action \ref{action:1} is feasible if no other agent decides to move to $u$, and if $u$ is occupied at the beginning of the considered step, then the occupying agent needs to move from $u$ to another vertex $u'$ in the same step.
(Note that we allow $u'=v$, which results in a \emph{swap} of the agents).
On the other hand, action \ref{action:2} is feasible if no other agent decides to move to $v$ in the considered step.
An agent that occupies an exit at the end of a step \emph{evacuates}, i.e., it is automatically removed from the graph.
Hence, at most one agent may evacuate through a given exit per step.

% =====================================================================
\section{The Problem Formulation}
\label{sec:problem}

Let $\loc[s]{i}$ denote the vertex occupied by agent $i$ at the end of a step $s$, with $\loc[0]{i}$ denoting its initial position (\emph{the homebase}). 
A tuple $(\loc{1},\ldots,\loc{k})$ of such positioning functions, where $\loc{i}$ has length $s_i$, is an \emph{evacuation strategy} if for each agent $i$ and for each $s<s_i$, $\loc[s]{i}\notin X$, and no two distinct agents $i,j$ occupy the same vertex at the same step, i.e.,
$\loc[s]{i}\neq\loc[s]{j}$ for all steps $s\le\min\{s_i,s_j\}$. 
Moreover, the evacuation strategy is \emph{successful} if $\loc[s_i]{i}\in X$ for each $i\in\{1,\ldots,k\}$.
The \emph{length} of an evacuation strategy $(\loc{1},\ldots,\loc{k})$ is defined as $\max\{s_1,\ldots,s_k\}$.
A successful evacuation strategy is \emph{optimal} if its length is minimized.

The discrete evacuation problem $\problemEvac$ is defined as follows:

\begin{description}
\item[\textsc{Discrete Evacuation ($\problemEvac$)}]
\item[Input:] A graph $G$, an integer $\ell$, a set $X$ of exits, and a set $H$ of homebases occupied by $k$ agents, where $k=|H|$.
\item[Question:] Is there a successful evacuation strategy of length at most $\ell$?
\end{description}

Since in this work we consider $\problemEvac$ in a \emph{distributed setting}, we assume that the vertices of the graph have unique identifiers, and that the agents have unique identifiers.
At the beginning of each step, each agent can access the identifier of the vertex $v$ it occupies, and for each edge $e$ incident to $v$ the agent is given the identifier of the other end of $e$.
Thanks to this assumption, the agent can correctly perform a move in each  step.
Formally, by a \emph{distributed algorithm} we understand an algorithm executed independently by each agent, which, given the same input, determines the agent’s action in every step.

\medskip
To evaluate the efficiency of the considered strategies, we use the standard measure of competitive analysis: the \emph{competitive ratio} that is the worst-case ratio of the evacuation time achieved by a distributed algorithm to the optimum centralized evacuation time~$\OPT$.

% =========================================================================
\section{Limitations of Spanning-Tree-Based Strategies}
\label{sec:sptrees}

A natural question concerns the best competitive strategies for various network topologies. 
Also, one might wonder whether there is some natural way of obtaining a constant-competitive distributed algorithm based on a fixed spanning tree of a given graph (perhaps also supported by the algorithm developed in \cite{BorDasDerKusz25}).
However, as we show in Proposition \ref{prop:spanning-tree}, any strategy in which agents restrict their movements to a chosen spanning tree fails to achieve a constant competitive ratio.
A natural definition of a ``spanning-tree-strategy'' requires that upon its completion, all edges traversed by the agents result in an acyclic subgraph. In the proposition below, we assume this subgraph is computed in the first step by the agents given the graph and exits locations.

\begin{proposition} \label{prop:spanning-tree}
There exist input instances with $N$ vertices for $\problemEvac$ for which any distributed algorithm that moves the agents along an arbitrarily selected spanning tree has competitive ratio $\Omega(\sqrt[5]{N})$.
\end{proposition}

\begin{figure}[tbp]
\centering
  \begin{minipage}{0.47\textwidth}
  \centering
   \includegraphics[width=\textwidth,clip]{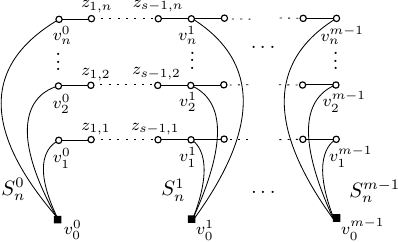}
    %\caption{(a)}
  \end{minipage}
  \hfill % figure distance
  \begin{minipage}{0.47\textwidth}
    \centering
    \includegraphics[width=\textwidth,clip]{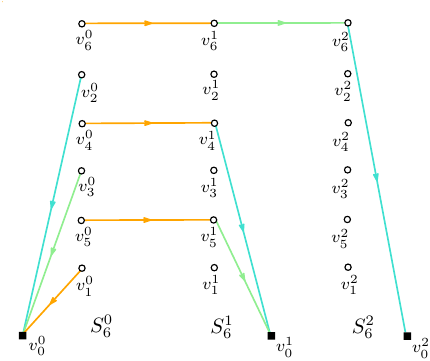}
   %\caption{(b)}
  \end{minipage}
  \caption{(a) The structure of graphs constructed in the proof of Proposition~\ref{prop:spanning-tree}; the exits are marked as black squares (b) An optimal strategy for $k=6$ agents, $m=3$ stars and $s=1$; unused edges are omitted; colors (orange, green, blue) correspond to subsequent steps of the strategy.}
  \label{fig:LBonSpTrees}
\end{figure}

\begin{proof}
Consider a union of $m$ star graphs $S_n^0, S_n^1, \ldots, S_n^{m-1}$ with centers $v_{0}^0, v_{0}^1, \ldots, v_0^{m-1}$, and let $v_{i}^j$ with $i \in \{1, \ldots, n\}$, $j \in \{0, \ldots, m-1\}$ be their leaves.
For each $j \in \{1, \ldots, m-1\}$ and $i \in \{1, \ldots, n\}$, we connect the leaf $v_i^{j-1}$ with the corresponding leaf $v_i^{j}$ by a path of length $s$. In the resulting graph $G$, let the set of exits consist of the centers of the stars, i.e., $X = \{ v_{0}^0, v_0^1, \ldots, v_0^{m-1} \}$.
(See Figure \ref{fig:LBonSpTrees}(a) for an illustration of this structure.)

We set the parameters such that $n=sm(m+1)/2$, $m=\Theta(\sqrt[5]{N})$  and $s=\Theta(m)$, where $N$ is the order of the graph.

In what follows we examine all spanning trees of $G$ by distinguishing two complementary cases.  
In each case we give an example of an initial placement of agents that results in sub-optimal evacuation time claimed in the proposition.

\medskip
\textit{Case 1}: Assume that a spanning tree $T$ contains all edges of each star.
We consider an instance with $k=n$ agents initially placed on the leaves of $S_n^0$, that is, $H = \{v_{1}^0, \ldots, v_{n}^0\}$.
In any strategy allowed to use all edges of $G$, only one agent per step can evacuate during the first $s$ steps, since only a single exit is reachable within distance $s$.
During the next $s$ steps, at most two agents per step can evacuate using both $v_0^0$ and $v_0^1$, with the latter exit used by those agents who, during the first $s$ steps, moved along the corresponding paths toward $v_0^1$. Continuing this argument, after $sm$ steps at most $sm(m+1)/2$ agents may have evacuated in total. Clearly, this bound is attainable and hence $OPT=sm$. (See Figure \ref{fig:LBonSpTrees}(b) for an illustration of an optimal strategy).

Now consider any evacuation strategy restricted to the edges of the spanning tree $T$. The key observation is that two distinct agents cannot evacuate through the same exit $v_0^j$ with $j>0$; for this to happen, $T$ would have to contain a cycle. 
As a consequence, evacuating $sm(m+1)/2$ agents requires $A_T \ge k-m+1$ steps for any strategy in this case.
Indeed, during the first $s(m-1)+1$ steps at most $s(m-1)+1$ agents can evacuate through $v_0^0$, while each of $m-1$ exits $v_0^j$ with $j>0$ can be used by at most one agent.
The remaining agents have to evacuate through $v_0^0$ one at a time, which in turn requires at least 
$k-(s(m-1)+1 + (m-1))$ further steps. Thus, 
$A_{T} \ge s(m-1)+1 + (k-(s(m-1)+1+(m-1)) = k-m+1$.
The competitive ratio is hence lower-bounded by
$$
\frac{A_{T}}{\OPT} \ge \frac{sm(m+1)/2-m +1 }{sm} \ge  \frac{m+1}{2} - \frac{1}{s} = \Omega(m).
$$

\textit{Case 2}: Suppose a spanning tree $T$ omits some edges of a star $S_n^j$ for some $j \in \{0,\ldots,m-1\}$; without loss of generality, assume $\{v_0^j,v_1^j\}$ does not belong to $T$. We consider an instance with a single agent initially placed at $v_1^j$ (i.e., $H = \{v_1^j\}$, $k=1$). 

In the optimal strategy, one step suffices to reach the exit at $v_0^j$ and hence $\OPT=1$. 
In contrast, when restricting to $T$, the agent must traverse a path of length at least $s$ to reach any exit. This implies that for the evacuation time $A_T$ along the tree $T$ we have $A_T \ge s$. 
Consequently, the competitive ratio is at least ${A_T}/{\OPT} \ge s=\Theta(\sqrt[5]{N})$.
\end{proof}

% =====================================================================
\section{Core Concepts of the Framework} 
\label{sec:framework}

In this section, we introduce the structure and foundational principles behind our evacuation framework.
We first define the notion of \emph{groups}, which serve as temporary centralized computational units enabling local coordination among agents (cf. Section~\ref{sec:groups}).
Then, we introduce the key concept of a \emph{$\B$-partition}, which groups vertices into coherent regions called \emph{zones}, and culminates in the construction of a \emph{zone graph}---a high-level representation that captures the adjacency relationships between zones (cf. Sections~\ref{sec:framework:partitions_and_zones} and~\ref{sec:framework:zonegraph}). 
By abstracting away fine-grained local details, the zone graph enables agents to coordinate at the zone level and supports the design of phase-based evacuation strategies.
The multi-level framework structure is further enriched in Section~\ref{sec:framework:strategies} by introducing \emph{internal} and \emph{exclusive strategies}, which tailor agents' low-level behavior to the structural characteristics of individual zones.

\medskip
In our framework, we use a nonnegative integer parameter $\B$, interpreted as a time limit that controls the granularity of partitions and thus bounds the duration of evacuation phases. 
Since the agents do not know $\OPT$ a priori, we employ a doubling technique to find an appropriate value for $\B$.
Ultimately, we prove that once $\B\geq\OPT$, the framework guarantees the successful evacuation of all agents within the corresponding phases of the actual epoch.

\medskip
A notable advantage of the framework is that it guarantees a feasible evacuation for any partition into connected subgraphs. 
It follows that while any such partition ensures feasibility, achieving high performance reduces to optimizing the structural properties of the underlying $\B$-partitions (cf. Theorem~\ref{thm:EvacuationTime} and the instantiation of the framework for grids in Section~\ref{sec:grids}).

% ---------------------------------------------------------
\subsection{The Role of Groups} 
\label{sec:groups}

In the distributed model used in this paper, the coordination of agents depends not only on their individual actions but also on the graph's local structure  around them.
Intuitively, a group represents a collection of agents whose neighborhoods overlap in a way that guarantees communication within the group. 
More formally, a set of agents $A$ is called a \emph{group} in a given step if the subgraph induced by $\bigcup_{a\in A}N[a]$ is connected, where $N[a]$ denotes the closed neighborhood of the vertex currently occupied by agent $a$.
The following is a consequence of the computational model.

\begin{proposition} \label{prop:group-communication}
If a set of agents $A$ forms a group at the beginning of a step, then any agents in $A$ can communicate and therefore exchange any number of messages in this~step.
\end{proposition}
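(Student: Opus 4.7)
The plan is to reduce the statement to the direct and indirect communication primitives described in the model. Given a group $A$ and two distinct agents $a, a' \in A$ at positions $u = \loc{a}$ and $u' = \loc{a'}$, I would use the connectedness of $G[\bigcup_{c \in A} N[c]]$ to exhibit an agent chain between $a$ and $a'$ along which messages can be relayed within a single step, and then invoke the fact (explicitly stated in Section~\ref{sec:model}) that directly-communicating pairs may exchange an unbounded number of messages per step to transfer information between $a$ and $a'$.

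First I would fix a $u$-$u'$ path $P = p_0 p_1 \cdots p_m$ in $G[\bigcup_{c \in A} N[c]]$, which exists by the definition of a group, and observe that every vertex $p_i$ is ``covered'' by at least one agent of $A$: there is some $c_i \in A$ with $c_i$ located at $p_i$ or at a neighbor of $p_i$ in $G$. The objective would then be to extract from these coverings an agent chain $a = d_0, d_1, \ldots, d_t = a'$ whose consecutive members are at distance at most $2$ in $G$. Once such a chain is in hand, each pair $(d_j, d_{j+1})$ communicates directly as described in the model, and because the number of messages that such a pair may exchange in one step is unbounded, messages can be relayed through $d_0 \to d_1 \to \cdots \to d_t$ (and back) within the same step, giving $a$ and $a'$ unbounded communication as claimed.

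The main obstacle is the chain-extension step. A naive bound on the distance between covering agents of two consecutive path vertices yields only $3$, which exceeds the direct-communication range. To sharpen this to $2$, I would analyse maximal runs of consecutive unoccupied vertices on $P$, exploiting the fact that every such vertex is adjacent to the position of some agent of $A$. By a local case analysis on how a run begins and ends, I would argue that between any two consecutive occupied vertices on $P$ there is always a covering agent whose position is within distance $2$ of the previous chain member, allowing the chain to advance past the run. Once this local bridging is verified everywhere along $P$, the extracted sequence $d_0, \ldots, d_t$ satisfies the required distance condition, and the proposition follows immediately from the relayability of messages along the chain within a single step, as guaranteed by the communication rules of Section~\ref{sec:model}.
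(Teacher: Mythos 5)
The paper offers no written argument for this proposition---it is asserted as a direct consequence of the communication model---so the question is whether your sketch closes the gap that you yourself identify, and it does not. The step you defer (``sharpen the distance-$3$ bound to $2$ by a case analysis on runs of unoccupied vertices of $P$'') is not merely left unverified; it fails in general under the literal definition of a group. Take $G$ to be the path $u - x - y - u'$ with exactly two agents, $a$ at $u$ and $a'$ at $u'$. Then $N[a]=\{u,x\}$ and $N[a']=\{y,u'\}$, their union induces the whole (connected) path, so $\{a,a'\}$ is a group; yet $\dist{u}{u'}=3$, there are no intermediate agents to serve as relays, and the only candidate chain is $a,a'$ itself. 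Neither the direct rule (distance at most $2$) nor the indirect rule (every vertex of the relay path occupied, or with both path-neighbors occupied) applies, so no local analysis along $P$ can produce the chain you need. The underlying obstruction is that connectivity of $\bigcup_{c\in A}N[c]$ is strictly weaker than connectivity of the intersection graph of the closed neighborhoods $N[c]$, and only the latter yields a chain of agents with consecutive distances at most $2$.

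The rest of your sketch is sound: the covering observation, the reduction to an agent chain with consecutive distances at most $2$, and the relay argument using unbounded per-step message exchange are exactly the right ingredients, and together they would prove the proposition whenever such a chain exists. To complete the argument you would need either (a) to prove the statement under the stronger hypothesis that the closed neighborhoods of the agents can be ordered so that consecutive ones intersect (equivalently, that there is a $uu'$-walk each of whose vertices is occupied or has both walk-neighbors occupied, which is precisely the paper's indirect-communication condition), or (b) to verify that the groups actually produced by the framework---for instance by the grouping procedure of Lemma~\ref{lem:first-grouping}, which packs the agents contiguously toward the root of a spanning tree---always satisfy this stronger property. As written, the chain-extension lemma on which your proof rests does not hold for an arbitrary set of agents satisfying the paper's definition of a group.
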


Consequently, such groups may act as temporary ``centralized computational units'' coordinated via standard convergecast and broadcast primitives. 
Specifically, all group members can send their local data to a distinguished agent $a$, who then performs a centralized computation taking into account the ``knowledge'' of the entire group. 
The result is a decision regarding the actions in the current step, or a series of actions in several subsequent steps, which is then broadcast by $a$ back to all other agents in the group.

An example of this type of computation is determining the exclusive strategies defined in Section~\ref{sec:framework:strategies}. 
To perform these calculations, we rely on a specific monotonicity property. While this property inherently holds for the \emph{centralized setting} of $\problemEvac$, under some additional assumptions, it can be successfully applied to our distributed approach.

\begin{proposition}\label{prop:monotonicity}
Let $I=(G,\ell,X,H)$ and $I'=(G,\ell,X,H')$ be the instances of the problem $\problemEvac$ with $H'\subseteq H$. If~$I$ is a $\YES$-instance, then $I'$ is also a $\YES$-instance.\qed
\end{proposition}

In fact, when referring to the above context, we usually mean that $\OPT(I')\le \OPT(I)$, which captures a natural intuition that removing agents cannot make the centralized evacuation harder.

% ---------------------------------------------------------
\subsection{$\B$-partitions and Zones}
\label{sec:framework:partitions_and_zones}

Let $\B$ be a nonnegative integer, and let $G[U]$ denote the subgraph induced by a vertex set $U$. 
We say that $(V_1,\ldots,V_l)$ is a \emph{$\B$-partition} of a graph $G=(V,E)$ with the exit set $X$ if the following conditions are satisfied:

\begin{enumerate}[label=(C\arabic*)]
 \item\label{it:partition1} $V_i\cap V_j=\emptyset$ for each $1\leq i<j\leq l$ and $V_1\cup\cdots\cup V_l=V$,
 \item\label{it:partition2} for each $i\in\{1,\ldots,l\}$, there exists a spanning tree of the graph $G[V_i\setminus X]$ 
  with root $v_i\in V_i$ such that for each $v\in V_i$ its distance from $v_i$ along the spanning tree is at most $\B$. 
\end{enumerate}

A subgraph $Z_i=G[V_i]$ induced by $V_i$, $i\in\{1,\ldots,l\}$ is called a \emph{zone}. The vertex $v_i$ is called the \emph{root of the zone $Z_i$}.

\medskip
The role of the spanning tree in Condition~\ref{it:partition2} is to ensure that each agent in $V_i$ can identify the root $v_i$, compute a path to the root along the spanning tree, and traverse this path in at most $\B$ steps.
Using the paths along spanning trees rather than taking arbitrary paths is crucial for bounding the time required to form groups by the agents located in particular zones. 
While an agent may not traverse the entire path due to the presence of other agents, the use of a spanning tree guarantees that all agents in $V_i$ can form a group within $\B$ steps (cf. Lemma~\ref{lem:first-grouping}).

% -------------------------------------------------------
\subsection{Internal and Exclusive Strategies}
\label{sec:framework:strategies}

Suppose we are given a $\B$-partition of a graph $G=(V,E)$, for some nonnegative integer $\B$. 
For the set of agents with their homebases in a zone $Z$ (i.e., initially located in $Z$) denoted by $\agents(Z)$, we introduce two fundamental evacuation strategies that dictate their behavior.

An evacuation strategy is called \emph{internal for} $\agents(Z)$ if during the execution of the strategy each agent in $\agents(Z)$ remains on the vertices of $Z$. 
We say that a zone $Z$ is \emph{self-sufficient} with respect to $\B$ (we omit $\B$ if clear from the context) if there exists a successful internal evacuation strategy that can be computed by the agents in $\agents(Z)$ and completed in $2\B$ steps. Otherwise, we say that $Z$ is \emph{non-self-sufficient}.
Note that, by definition, a self-sufficient zone must contain an exit.

An evacuation strategy is called \emph{exclusive for} $\agents(Z)$ if it is defined for the input instance consisting of the entire graph $G$, only the agents in $\agents(Z)$, and all exits $X$ available exclusively for the evacuation of \emph{only the agents in} $\agents(Z)$.
In other words, it is the instance we obtain from the original one by removing all agents except those initially located in $Z$.

\medskip
We establish the following lemma regarding the calculation of exclusive strategies.
(Note that the evacuation strategy claimed in the lemma assumes that the group of agents has already been formed.)

\begin{lemma} \label{lem:exclusive-strategy}
Suppose that $Z$ is not self-sufficient and the agents in $\agents(Z)$ formed a group within $s$ steps, starting from the homebases.
There exists a successful exclusive evacuation strategy of length at most $s+\OPT$ that can be calculated by the agents in the group.
\end{lemma}

\begin{proof}
There exists the following successful exclusive evacuation strategy that starts in the step after the group is formed. In this strategy, first, all agents in $\agents(Z)$ backtrack the moves they made to form the group, so that they arrive at their initial homebases in $H$; second, they execute a successful strategy $\cS$ of length at most $\OPT$ starting from the homebases.
Since the agents started forming the group from their homebases, this backtracking takes no longer than the duration of group formation, which is at most $s$ steps by assumption.

The successful strategy $\cS$ exists and can be computed via a centralized algorithm.
Indeed, once the agents have formed a group, it can be assumed that one of the agents knows the homebases of all agents in $\agents(Z)$.
Hence, such an agent executes a centralized algorithm that finds an optimal successful evacuation strategy $\cS$ and then sends $\cS$ to all other agents in $\agents(Z)$.
The bound $\OPT$ on the length of $\cS$ follows from Proposition~\ref{prop:monotonicity}. Altogether, our successful exclusive strategy requires at most $s+\OPT$ steps.
\end{proof}

% -----------------------------------------------------------------------
\subsection{Zone Graphs}
\label{sec:framework:zonegraph}

The next component of our framework is a zone graph. 
Given a nonnegative integer $\B$ and a $\B$-partition of a graph $G=(V,E)$, we say that two zones $Z$ and $Z'$ are \emph{close} if $Z$ contains a vertex $z\in V(G)$ and $Z'$ contains a vertex $z'\in V(G)$ such that
\begin{equation}\label{eq:dist}
   \dist{z}{z'}\leq 2\B,
\end{equation}
where $\dist{u}{v}$ for two vertices $u$ and $v$ is the length of a~shortest $uv$-path in $G$ that \emph{does not contain an exit}.

For a $\B$-partition $(V_1,\ldots,V_l)$ of $G$, let $Z_i=G[V_i]$, $i\in\{1,\ldots,l\}$, denote the \emph{$i$-th zone}.
The \emph{zone graph} of a $\B$-partition $(V_1,\ldots,V_l)$ is a graph $\ZG=(\ZGV,\ZGE)$ with the vertex set $\ZGV=\{Z_1,\ldots,Z_l\}$ and the edge set $\ZGE$ such that $\{Z,Z'\}\in\ZGE$ if and only if the zones $Z$ and $Z'$ are close and none of them is self-sufficient.

By the above definition, the last condition implies that each self-sufficient zone is an isolated vertex in the zone graph $\ZG$.
Accordingly, we say that the zones $Z_1,\ldots,Z_p$ are \emph{independent} if $\{Z_i,Z_j\}\notin\ZGE$ for any $1\leq i<j\leq p$. 

\medskip
As shown in the subsequent analysis, the chromatic number of the zone graph multiplicatively contributes to the evacuation time. Modeling self-sufficient zones as an independent set of isolated vertices in the corresponding zone graph ensures that their contribution to the evacuation time is minimized.

% -----------------------------------------------------------------
\subsection{Framework Operation Outline} \label{sec:framework:outline}

Before the formal analysis, we give a high-level overview of how the framework operates. 
The entire strategy is divided into epochs, composed of several phases as reflected in the pseudocode of Algorithms \ref{alg:Evacuation}, \ref{alg:OneAttempt}, and \ref{alg:SinglePhase}.
The sequence of steps performed by the agents resulting from a single call to Algorithm \ref{alg:SinglePhase} constitutes one \emph{phase}.
Correspondingly, a single call to Algorithm \ref{alg:OneAttempt} for a fixed value of $\B$ constitutes one \emph{epoch}.
In each epoch, the goal is to compute and execute a successful evacuation strategy for all agents.

\begin{algorithm}[t!]
\SetAlgoRefName{{\upshape\scshape Evacuation}}
   \caption{\\Input: a graph $G$ with an exit set $X$}
   \label{alg:Evacuation}
   $\B=2$
   
   \While{there are agents present in the graph $G$}{
      Compute a $\B$-partition $(V_1,\ldots,V_l)$ \label{ln:evac-partition}
      
      Call Algorithm~\ref{alg:OneAttempt}($\B,(V_1,\ldots,V_l)$) \label{ln:evac-phase}
      
      $\B=2\B$
   }
\end{algorithm}

To achieve this, Algorithm \ref{alg:Evacuation} repeatedly launches attempts with successively doubled values of the parameter $\B$ (a bound $\B=2^j$ for $\OPT$ in the $j$-th epoch), which controls the granularity of the calculated partitions. For each attempt, the agents compute the corresponding $\B$-partition of the graph and execute Algorithm \ref{alg:OneAttempt} (with this partition).
As we show later, once $\B$ becomes sufficiently large (specifically, $\B\ge\OPT$), a successful evacuation strategy is guaranteed to be found and completed in this epoch. Conversely, if $\B<\OPT$, then the evacuation attempt in such an epoch may fail, i.e., not all agents evacuate. The agents can easily detect such an unsuccessful attempt, and hence every agent that did not evacuate within the assumed time limit backtracks and starts over from its homebase in the next epoch. The algorithm continues by doubling~$\B$ before the next attempt.

\begin{algorithm}[b!]
\SetAlgoRefName{{\upshape\scshape OneAttempt}}
   \caption{\\Input: an integer $\B$ and a $\B$-partition $(V_1,\ldots,V_l)$ of the graph $G$}
   \label{alg:OneAttempt}
   Compute the zone graph $\ZG$ of the $\B$-partition $(V_1,\ldots,V_l)$
   
   Find a $d$-coloring $c$ of $\ZG$
   
   \ForEach{$i\in\{1,\ldots,d\}$}{
      Call \ref{alg:SinglePhase}($\B, c^{-1}(i)$) \label{ln:OneAttempt:SinglePhase}

      \mbox{Backtrack all not evacuated agents in each $Z$ with $c(Z)=i$ to their homebases}\label{ln:OneAttempt:backtrack}
   }
\end{algorithm}

We emphasize that backtracking to the starting position after each phase is a key design choice that significantly simplifies the analysis, while adding only a constant factor to the final competitive ratio.
The same justification applies to the doubling technique, which also contributes only a constant factor to the evacuation time. 
As our primary objective is to establish a general framework allowing a constant competitive ratio, we prioritize structural simplicity over the optimization of exact constant factors.

\medskip
A single attempt, captured by Algorithm \ref{alg:OneAttempt}, takes the computed $\B$-partition and constructs the associated zone graph $\ZG=(\ZGV,\ZGE)$.
The computation of the zone graph $\ZG$ can be done without knowing the locations of other agents, and therefore, it is performed independently by each agent in the first step of each epoch.
(Recall that the definition of the zone graph depends on $\B$ and thus the zone graph is different in each epoch).
Next, applying the same deterministic algorithm, all agents find an identical $d$-\emph{coloring} of $\ZG$ defined as a function $c\colon\ZGV\rightarrow\{1,\ldots,d\}$ such that $c(Z)\neq c(Z')$ for any adjacent zones $Z$ and $Z'$.
Consequently, since each agent knows $\B$ at the beginning of an epoch, it can locally determine the number of phases $d$.

Algorithm~\ref{alg:OneAttempt} coordinates the agents at the level of zones, leaving the vertex-level coordination to Algorithm \ref{alg:SinglePhase} that actually handles agents' movements realizing internal or exclusive strategies of particular zones.

\begin{algorithm}[tb!]
\SetAlgoRefName{{\upshape\scshape SinglePhase}}
   \caption{\\Input: an integer $\B$ and a set of zones $\cZ$}
   \label{alg:SinglePhase}
      Let $t$ be the step to be executed \label{ln:phase:start}
      
      \ForEach{$Z\in\cZ$}{
         \If{$Z$ is self-sufficient}{
            Compute an internal strategy $\cS$ for $\agents(Z)$ \label{ln:phase:internal-strategy}
            
            Execute $\cS$ in steps $t,\ldots,t+2\B-1$
            \label{ln:phase:execute-internal-strategy}
         }\Else{
            Group the agents $\agents(Z)$ in steps $t,\ldots,t+\B-1$\label{ln:phase:grouping}
            
            Compute an exclusive strategy $\cS$ for $\agents(Z)$ starting at the agents'  current locations \label{ln:phase:exclusive-strategy}
            
            Execute $\cS$ in steps $t+\B,\ldots,t+3\B-1$ \label{ln:phase:execute-exclusive-strategies}
      }
   }
\end{algorithm}

\medskip
In each phase $i\in\{1,\ldots,d\}$, Algorithm \ref{alg:SinglePhase} attempts to evacuate agents in zones colored $i$, leaving other agents idle. Such zones are handled in two distinct ways: 

\begin{enumerate}
\item The agents in self-sufficient zones compute successful internal evacuation strategies and execute them. 
Our framework abstracts away the exact method of finding these internal strategies (it may greatly depend on the structure of the input graph and on the locations of the exits), but assumes that for each self-sufficient zone such a strategy exists (see the definition of self-sufficient zone) and is provided in an application of the framework (which we demonstrate for grids in Section~\ref{sec:grids}).

\item The agents in other zones compute their strategies in a distributed manner provided explicitly by the framework (see, e.g., Lemma \ref{lem:exclusive-strategy}).
We emphasize that handling these zones constitutes the core algorithmic difficulty in our computation.
Furthermore, the computed strategies evacuate the agents not only using the exits in the zone, but also outside the zone.
\end{enumerate}

Ultimately, in every execution, after several epochs that fail, the framework reaches an epoch in which $\B\geq\OPT$. This ensures that a successful evacuation strategy is computed in the current epoch.

% ================================================================
\section{Analysis of the Framework} \label{sec:framework:analysis}

The main result of this section is Theorem~\ref{thm:EvacuationTime}.
Its proof is deferred to Section~\ref{sec:prf}, following the formal analysis of the framework's components and their properties.

\begin{theorem} \label{thm:EvacuationTime}
If $d_j$ is the number of colors used for the zone graph  computed in the $j$-th epoch, then all agents evacuate in $s$ steps, where
\begin{equation}
    s=4\sum_{j=1}^{p}d_j2^j
\end{equation}
and $p=\lceil\log_2\OPT\rceil$.
\end{theorem}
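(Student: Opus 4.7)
The plan is to bound each epoch separately and then sum. I would first establish that every phase of the $j$-th epoch takes at most $6 \cdot 2^j$ steps, and then use the fact that in the last counted epoch, where $\B\geq\OPT$, every remaining agent evacuates successfully.

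For a single phase of the $j$-th epoch, write $\B=2^j$ and decompose its execution into three sub-stages, each of length at most $2\B$. First, a \emph{grouping} sub-stage in which the agents located in each active (colored-$i$) zone $Z$ travel along the spanning tree provided by Condition~\eqref{it:partition2} and assemble at the center $v_Z$; by the depth bound this requires at most $\B$ steps of actual movement, plus up to $\B$ steps of slack to accommodate the distributed bookkeeping (skipping, convergecast, and broadcast, which are made available by Proposition~\ref{prop:group-communication}). Second, an \emph{execution} sub-stage in which each self-sufficient active zone runs its internal strategy (length $\leq\B$ by definition of self-sufficiency) and each remaining active zone simulates the exclusive strategy supplied by Lemma~\ref{lem:eclusive-strategy}; when $\B\geq\OPT$ this simulation also completes in $O(\B)$ steps. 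Third, a \emph{reset} sub-stage in which every agent that has not yet evacuated backtracks to its homebase along the reversed route, again in at most $2\B$ steps. Summing the three sub-stages yields the per-phase bound of $6\B$.

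Since the $j$-th epoch consists of exactly $d_j$ phases (one per color of $\ZG$), its total cost is at most $6d_j2^j$. Summing over $j=1,\ldots,p$ with $p=\lceil\log_2\OPT\rceil$ gives the stated total of $6\sum_{j=1}^{p}d_j2^j$. It remains to verify that epoch $p$ indeed produces a successful strategy: since $\B=2^p\geq\OPT$, Lemma~\ref{lem:eclusive-strategy} guarantees that every non-self-sufficient active zone's exclusive strategy completes within the time limit, and the self-sufficient active zones evacuate by definition, so once every color class has been processed no agent remains.

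The main obstacle is verifying that distinct active zones sharing a color do not interfere during the execution sub-stage, so that the per-zone time bounds remain valid when the zones are run in parallel. This is where the structure of the zone graph is indispensable: two non-self-sufficient zones that share a color are non-adjacent in $\ZG$, hence not close, hence by~\eqref{eq:dist} their vertices are separated by distance strictly greater than $2\B$ in $G$. Since every agent moves only $O(\B)$ within a single phase, the footprints of same-colored non-self-sufficient zones remain disjoint, while the agents of self-sufficient zones remain confined to their own zones. Combining this non-interference with the distributed implementation of exclusive strategies (to be supplied by the remainder of the framework) closes the argument.
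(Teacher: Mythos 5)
Your proposal is correct and follows essentially the same route as the paper: a $6\B$ bound per phase (the paper allots $\B$ for grouping, $2\B$ for the exclusive/internal execution, and up to $3\B$ for backtracking, versus your $2\B+2\B+2\B$ split, but the totals coincide), $d_j$ phases per epoch, summation over epochs up to $p=\lceil\log_2\OPT\rceil$ with monotonicity keeping $\OPT$ valid for the agents left over from failed epochs, and non-interference of same-colored zones via the closeness condition. The one step to tighten is the non-interference argument: ``every agent moves only $O(\B)$'' should be sharpened to ``every agent stays within distance $\B$ of its own zone'' (which holds because the exclusive strategy first backtracks to the homebase and then runs an optimal strategy of length $\OPT\leq\B$), since with same-colored zones separated by only a little more than $2\B$ a displacement of, say, $1.5\B$ on each side would not yield disjoint footprints.
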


% --------------------------------------------------------------
\subsection{A Single Phase} 
\label{sec:framework:phase}

We first analyze Algorithm \ref{alg:SinglePhase}, which handles the evacuation of agents within a given set of zones $\cZ$.
As we will see, $\cZ$ constitutes a specific subset of zones derived from the underlying $\B$-partition, allowing evacuations in particular zones to be scheduled concurrently. The algorithm explicitly manages potential inter-zone interactions, with the objective of evacuating all agents in each zone $Z\in\cZ$ in at most $3\B$ steps.

To achieve this, Algorithm \ref{alg:SinglePhase} distinguishes between self-sufficient and non-self-sufficient zones.
If $Z$ is self-sufficient, then agents simply execute an internal evacuation strategy taking $2\B$ steps $t,\ldots,t+2\B-1$. 
Conversely, for a non-self-sufficient zone $Z$, the first $\B$ steps are used for grouping the agents (see line~\ref{ln:phase:grouping}), a mechanism whose correctness we establish in Lemma~\ref{lem:first-grouping}.
Subsequently, the agents in $\agents(Z)$ compute and execute an exclusive strategy $\cS$ (see line~\ref{ln:phase:exclusive-strategy}). The strategy $\cS$ operates under the premise that \emph{agents originate from their current locations} (i.e., their locations at step $t+\B-1$) rather than their initial homebases (cf. Lemma~\ref{lem:exclusive-strategy})

To ensure that the concurrent execution of these strategies incurs no delays, the framework handles two types of potential interactions.
First, the strategy $\cS$ may use an exit $x\notin Z$ to evacuate some agents in $\agents(Z)$. However, we will argue that, within an epoch, agents from different zones $Z$ in $\cZ$ never attempt to use such an exit $x$ at the same time.
Second, moving agents executing $\cS$ might encounter idle agents from self-sufficient or neighboring zones in $\ZG$. In this case, the duration of $\cS$ cannot increase because moving agents can skip the idle ones.
To see this, consider an agent $a$ located at a vertex $v_0$ that must traverse a path $v_0,\ldots,v_p$ whose internal vertices $v_1,\ldots,v_{p-1}$ are already occupied. If these agents remain idle for $p$ steps, then in step $p$ each agent at $v_i$ for $i\in\{0,\ldots,p-1\}$ moves to $v_{i+1}$, with its memory contents transferred via message exchanges, so that the agent at $v_p$ receives the data of $a$, including its identifier. This simulates $a$ moving from $v_0$ to $v_p$ in $p$ steps while the other agents remain idle, a process we refer to as a \emph{skip} of the agents at $v_1,\ldots,v_{p-1}$ by the agent $a$.

\begin{lemma} \label{lem:first-grouping}
Consider a zone $Z$ of a $\B$-partition of $G$.
If $Z$ is not self-sufficient, then there exists an algorithm that groups the agents in $\agents(Z)$ in at most $\B$ steps.
\end{lemma}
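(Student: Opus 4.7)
The plan is to reuse the spanning tree $T$ of $G[Z\setminus X]$ rooted at $v_i$ guaranteed by Condition~\eqref{it:partition2}; its depth is at most $\B$, and since homebases avoid $X$, every agent of $\agents(Z)$ starts at a vertex of $T$. Each agent knows $G$, $X$, and its current location, so it will compute $T$ by a fixed deterministic rule and hence the unique $T$-path from its vertex up to $v_i$. The algorithm runs for $\B$ steps; in each of them, every agent at $u\neq v_i$ tries to shift one level toward $v_i$ by a coordinated chain move. Writing $u=u_0,u_1,\dots,u_d=v_i$ for the tree path and letting $j$ be the smallest index with $u_j$ unoccupied at the step's beginning, the agents at $u_0,\dots,u_{j-1}$ simultaneously move $u_i\to u_{i+1}$. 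When several chains from different subtrees claim the same empty $u_j$, the chain whose top agent has the smallest identifier wins and the others wait this step; the agent at $v_i$ (if any) never moves. These chain shifts are collision-free, and the required coordination fits the communication model: an occupied chain in $T$ is a path of occupied vertices admitting indirect communication along it, while two sibling chain-tops share the parent $u_j$ and thus have $G$-distance at most~$2$.

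\paragraph*{Invariant.} I would prove correctness via the following invariant, by induction on $t\in\{0,1,\dots,\B\}$: \emph{if at the end of step $t$ an agent $a$ occupies some $u\neq v_i$ whose parent in $T$ is unoccupied at that moment, then $a$'s initial depth $d_0(a)$ in $T$ satisfies $d_0(a)\geq t+\mathrm{depth}(u)$}. The base case $t=0$ is immediate. The inductive step splits on whether $a$ moved in step $t$. If $a$ did not move, then either $a$ was blocked (its whole $T$-path to $v_i$ was occupied at the step's start, in particular $u$'s parent) or $a$ lost a sibling conflict (the winning sibling now occupies $u$'s parent); both contradict the parent being empty at the step's end. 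Hence $a$ moved, from some child $u'$ of $u$; for a chain containing $a$ to stop exactly at $u$, the vertex $u$ must have been empty at the step's start, so at the end of step $t-1$ the agent $a$ was at $u'$ with its $T$-parent $u$ unoccupied, and the inductive hypothesis applied to $u'$ at time $t-1$ gives $d_0(a)\geq(t-1)+\mathrm{depth}(u')=t+\mathrm{depth}(u)$.

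\paragraph*{Conclusion.} Taking $t=\B$ and using $d_0(a)\leq\B$ together with $\mathrm{depth}(u)\geq 1$ for $u\neq v_i$, the invariant will force every agent not at $v_i$ to have its $T$-parent occupied at the end of step $\B$. Consequently, the set of occupied vertices is closed under taking parents in $T$ and thus forms a subtree of $T$ containing $v_i$, which is connected in $G$; hence $\agents(Z)$ is a group after $\B$ steps, as required. The main obstacle is the inductive step above: the interaction of simultaneous chain shifts with sibling conflicts has to be handled with care, and the algorithm's two rules---chains always extend to the first empty ancestor, and sibling conflicts immediately refill the disputed parent with the winner---are exactly what close the two cases of the induction.
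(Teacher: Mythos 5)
Your overall route is the same as the paper's: both proofs send every agent along its spanning-tree path from Condition~\eqref{it:partition2} toward the root $v_i$, moving when possible and waiting otherwise, and both conclude by an inductive/potential argument that $\B$ steps suffice. Your invariant ($d_0(a)\geq t+\mathrm{depth}(u)$ whenever $a$ sits at $u$ with an empty parent at the end of step $t$) is a cleaner, more explicit version of the paper's rather terse ``the number of unoccupied nodes on $P_{v'v}$ cannot increase\dots\ a simple inductive argument shows that $\B$ steps are enough,'' and reducing the grouping claim to ``every occupied non-root vertex has an occupied parent at time $\B$'' is a tidy way to finish.

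There is, however, one concrete flaw in the algorithm as you specify it: the claimed collision-freeness fails when an \emph{occupied} vertex is vacated and has more than one occupied child. Concretely, let $p$ be a depth-$1$ vertex with two occupied children $c_1,c_2$, let $p$ be occupied and $v_i$ empty. The chain seen from $c_1$ is $c_1,p$ with first empty ancestor $v_i$, so it prescribes $c_1\to p$ and $p\to v_i$; the chain seen from $c_2$ prescribes $c_2\to p$ and $p\to v_i$. Both agents move into $p$ in the same step. Your tie-breaking rule does not apply: it only arbitrates among chains claiming the same \emph{empty} vertex $u_j$, whereas the contested vertex $p$ is occupied at the step's beginning, and the two chains share the same top agent (at $p$), so comparing top identifiers cannot separate them. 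The fix is local: apply identifier-based arbitration at \emph{every} vertex entered during the step, i.e., each vertex (empty, or being vacated because its own agent moves up) admits at most one incoming agent, namely the smallest-identifier occupied child requesting entry; equivalently, below each empty vertex a single occupied chain, chosen greedily at each branching, shifts up by one. With this repair the moves are collision-free, the coordination still needs only communication along occupied paths and between siblings at distance two, and your inductive step goes through: in the ``$a$ did not move'' case whichever child wins leaves the parent occupied at the step's end, and in the ``$a$ moved'' case $u$ must have been empty at the step's start, exactly as you argue. (A smaller nit: in the ``lost a sibling conflict'' subcase your justification covers only $j=1$; for $j\geq 2$ the parent was occupied at the start and its agent, belonging to the same losing chain, does not move --- same conclusion, but it deserves a sentence.)
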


\begin{proof}
Condition~\ref{it:partition2} in the definition of $\B$-partition distinguishes a vertex $v\in Z$ such that any other vertex $u$ of $Z$ is connected to $v$ by a path $P_{uv}$ of length at most $\B$ (the existence of such a vertex, the root of $Z$, is guaranteed by the definition of a zone).
Consider a grouping strategy in which each agent $a$ currently at $u$ attempts to traverse $P_{uv}$ from $u$ to $v$ (recall that we consider the paths along a fixed spanning tree).
The moves of the agent are defined such that if $a$ is able to get closer to $v$ (moving along $P_{uv}$), then it makes the move; otherwise it remains idle.
In this context, observe that Condition~\ref{it:partition2} implies that whenever the paths of agents starting at distinct vertices $u$ and $u'$ have a common vertex $v'$, then they have to share a subpath $P_{v'v}$ connecting $v'$ with $v$.
Consequently, whenever $a$ needs to stay idle in a given step, then either all vertices of the path $P_{uv}$ are occupied by agents
or there is some agent occupying a vertex of $P_{uv}$ that gives a way to an agent $a'$ on the other path that shares with $P_{uv}$ a subpath with the end in $v$. Without loss of generality, we may assume that this is the path $P_{u'v}$. Therefore, the number of \emph{unoccupied vertices} on $P_{v'v}$ cannot increase (recall that the endvertex $v$ is the root) and since the length of $P_{v'v}$ is at most $\B$, an inductive argument shows that $\B$ steps are sufficient to form a group consisting of all agents in $\agents(Z)$.
\end{proof}

Under the assumption that $\B\ge\OPT$, in Lemma~\ref{lem:SinglePhase:S} we establish that $3\B$ steps of Algorithm~\ref{alg:SinglePhase} suffice to evacuate agents from any zone that is not self-sufficient.
Recall that, in general, agents in such zones may need to evacuate via exits located outside the zone.
To accurately capture the coordination required for inter-zone evacuations, we define a zone $Z$ as \emph{idle} in a given phase if $Z\notin\cZ$, where $\cZ$ is the input provided to Algorithm~\ref{alg:SinglePhase}.

\begin{lemma} \label{lem:SinglePhase:S}
Let $\B$ and $\cZ$ be an input to Algorithm \ref{alg:SinglePhase}.
If $\B\geq\OPT$, the zones in $\cZ$ are independent, all zones not in $\cZ$ are idle in this phase, and each self-sufficient zone $Z$ in $\cZ$ admits an internal successful evacuation strategy of length at most $2\B$, then the strategy computed and executed by the agents in $\agents(Z)$ is successful for each $Z\in\cZ$.
\end{lemma}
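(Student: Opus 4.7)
The plan is to handle each zone $Z\in\cZ$ separately, splitting according to whether $Z$ is self-sufficient, and to first rule out interference from agents outside the current zone. For a self-sufficient $Z\in\cZ$, Algorithm~\ref{alg:SinglePhase} executes in steps $t,\ldots,t+\B-1$ the internal strategy of length $\leq\B$ guaranteed by hypothesis. During this interval every other agent is confined to its own zone: zones outside $\cZ$ are idle, self-sufficient zones in $\cZ$ run their own internal strategies, and non-self-sufficient zones in $\cZ$ carry out the grouping of Lemma~\ref{lem:first-grouping}, whose moves stay inside the zone along its spanning tree. Hence no foreign agent enters $Z$, and the internal strategy evacuates $\agents(Z)$ as computed.

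For a non-self-sufficient $Z\in\cZ$, the same isolation argument shows that the grouping of Lemma~\ref{lem:first-grouping} completes in at most $\B$ steps, so at step $t+\B$ the agents of $\agents(Z)$ form a group and can communicate freely by Proposition~\ref{prop:group-communication}. I would then build the required exclusive strategy $\cS$ of length at most $2\B$ from the post-grouping configuration by concatenating two pieces: the time-reversal of the grouping schedule (which, exploiting swaps and the same spanning-tree paths, returns each agent to its own homebase in $\leq \B$ steps) and the exclusive strategy from the homebases given by Lemma~\ref{lem:eclusive-strategy}, whose length is at most $\OPT\leq \B$ by hypothesis. Their concatenation fits in the window $t+\B,\ldots,t+3\B-1$, and since after grouping the agents compute $\cS$ centrally, mere existence of such a $\cS$ is enough.

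The main obstacle is arguing that the exclusive strategies executed in parallel by the different zones of $\cZ$ do not interfere. Idle agents from zones outside $\cZ$, and those in self-sufficient zones of $\cZ$ that have already evacuated by step $t+\B$, act only as stationary obstacles that $\agents(Z)$ may pass using the skip mechanism of Section~\ref{sec:preliminaries} at no extra cost in time. For two concurrently active non-self-sufficient zones $Z,Z'\in\cZ$, independence in $\ZG$ forbids them from being close, and a distance/reach argument along non-exit paths (which agents necessarily traverse while not yet evacuated) rules out both collisions at a common interior vertex and competition for the same exit; any such common target would witness a short non-exit walk linking $Z$ and $Z'$, contradicting independence. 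Once non-interference is established, the centrally computed $\cS$ runs as planned and all of $\agents(Z)$ evacuate, completing the proof.
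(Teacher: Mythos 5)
Your proposal is correct and follows essentially the same route as the paper: internal strategies for self-sufficient zones in the first $\B$ steps, grouping via Lemma~\ref{lem:first-grouping} for the others, a $2\B$-step exclusive strategy built as ``backtrack to homebases, then run the exclusive strategy of Lemma~\ref{lem:eclusive-strategy}'', and non-interference between concurrently active non-self-sufficient zones derived from independence (distance greater than $2\B$). The only cosmetic difference is that the paper phrases the non-interference step by restricting a single optimal strategy to the two agent sets, whereas you argue directly that a shared vertex or exit would force the zones to be close; both reduce to the same reach bound.
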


\begin{proof}
Consider any two zones $Z, Z' \in\cZ$. 
The proof naturally falls into three cases depending on self-sufficiency of $Z$ and $Z'$.

First, consider the case where both zones are self-sufficient. 
The strategies defined for them in line~\ref{ln:phase:internal-strategy} are internal and hence both may be correctly executed in parallel during steps $t,\ldots,t+2\B-1$.
(The fact that such a strategy exists for each self-sufficient zone and can be computed independently by the agents initially located in the zone follows by assumption of the lemma.)

Second, if neither $Z$ nor $Z'$ is self-sufficient, then the independence of the zones in $\cZ$ implies that $Z$ and $Z'$ are not close (recall that by \eqref{eq:dist} this yields $\dist{z}{z'}>2\B$ for any $z\in Z$ and $z'\in Z'$).
Since $\B\geq\OPT$, the definition of the zone graph guarantees that in every optimal successful strategy, the paths traversed by any agents $a\in\agents(Z)$ and $a'\in\agents(Z')$, share no vertex; in particular, they end at different exits.
Take any optimal successful evacuation strategy $\tilde{\cS}$ and consider the exclusive strategies $\cS$ and $\cS'$ obtained by restricting $\tilde{\cS}$ to the agents in $\agents(Z)$ and $\agents(Z')$, respectively.
Because in the strategy $\tilde{\cS}$ no agent in $\agents(Z)$ communicates with an agent in $\agents(Z')$, the strategies $\cS$ and $\cS'$ inherently preserve this property.
Accordingly, this extends to the strategies computed in line~\ref{ln:phase:exclusive-strategy} during the corresponding iterations of the main loop of Algorithm \ref{alg:SinglePhase}.
Applying Lemma~\ref{lem:exclusive-strategy} (with $s\leq\B$), we conclude that the strategy computed in line~\ref{ln:phase:exclusive-strategy} of Algorithm \ref{alg:SinglePhase} is successful and requires at most $2\B$ steps. 
Thus, the time limit established in line~\ref{ln:phase:execute-exclusive-strategies} is sufficient for the strategy to be correctly executed.

Finally, we consider the case in which $Z$ is self-sufficient and $Z'$ is not.
By Lemma \ref{lem:first-grouping}, the initial $\B$ steps $t,\ldots,t+\B-1$ are sufficient for the agents in $Z'$ to group and calculate an exclusive strategy $\cS$ (see line~\ref{ln:phase:grouping}). 
Next, in steps $t+\B,\ldots,t+2\B-1$, the agents in $Z'$ return to their homebases.
Since no agent leaves $Z'$ during the entire process in steps $t,\ldots,t+2\B-1$, grouping and backtracking in $Z'$ can be executed concurrently with the internal evacuation strategy for $Z$. 
Moreover, by the assumption of the lemma regarding internal strategies, all agents in $Z$ successfully evacuate by the end of step $t+2\B-1$ (due to $\B\geq\OPT$). 
Therefore, the exclusive strategy $\cS$ executed in line~\ref{ln:phase:execute-exclusive-strategies} cannot be influenced by any agent in $\agents(Z)$.
\end{proof}

Since in Algorithm~\ref{alg:SinglePhase} the strategy executed in line~\ref{ln:phase:execute-internal-strategy} requires $2\B$ steps and the strategies in lines~\ref{ln:phase:grouping} and~\ref{ln:phase:execute-exclusive-strategies} collectively require $3\B$ steps, Lemma~\ref{lem:SinglePhase:S} yields the following bound.

\begin{lemma} \label{lem:singlePhaseTime}
An execution of Algorithm \ref{alg:SinglePhase} with input $\B$ results in $3\B$ steps used by an evacuation strategy.
\qed
\end{lemma}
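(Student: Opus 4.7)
The plan is to read the claim directly off the line ranges inside the \texttt{foreach} loop of Algorithm \ref{alg:SinglePhase}, since the lemma is essentially bookkeeping: each iteration schedules the work of a zone $Z\in\cZ$ within the time window $[t,t+3\B-1]$, and all iterations run in parallel on the same window.

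First, I would fix $t$ as the step given in line~\ref{ln:phase:start} and inspect the two branches of the conditional. In the self-sufficient branch, line~\ref{ln:phase:execute-internal-strategy} schedules the internal strategy $\cS$ for $\agents(Z)$ over steps $t,\ldots,t+\B-1$, consuming $\B$ steps; after that, no further step is needed for $Z$ within the phase. In the non-self-sufficient branch, line~\ref{ln:phase:grouping} uses steps $t,\ldots,t+\B-1$ ($\B$ steps) for grouping, and line~\ref{ln:phase:execute-exclusive-strategies} uses steps $t+\B,\ldots,t+3\B-1$ ($2\B$ steps) for executing the exclusive strategy $\cS$. Summing the two consecutive intervals yields $\B+2\B=3\B$ steps for this branch.

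Since every iteration of the \texttt{foreach} loop terminates by step $t+3\B-1$ (the self-sufficient branch strictly earlier, the non-self-sufficient branch exactly at that step), the whole phase is contained in the $3\B$-step window $[t,t+3\B-1]$. Taking the maximum over the two branches gives the bound $3\B$ claimed by the lemma.

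There is no real obstacle here: the argument is a direct inspection of the algorithm's line numbers, and all nontrivial content (correctness of the parallel execution, nonblocking of the exclusive strategies by agents from other zones, feasibility of grouping in $\B$ steps) has already been discharged by Lemma~\ref{lem:first-grouping} and Lemma~\ref{lem:SinglePhase:S}. The one mild subtlety worth flagging is that the $3\B$ bound is declared unconditionally, independent of whether any, all, or none of the zones in $\cZ$ are self-sufficient; this is fine because $3\B$ is an upper bound on both branches' durations.
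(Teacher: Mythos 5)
Your proposal is correct and matches the paper's own justification, which is exactly the same bookkeeping: the internal strategy occupies steps $t,\ldots,t+\B-1$, while grouping plus the exclusive strategy occupy steps $t,\ldots,t+3\B-1$, so the phase fits in $3\B$ steps. Nothing further is needed.
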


% ----------------------------------------------------------------------
\subsection{Combining all Phases within an Epoch} \label{sec:framework:phases}

In this section, we analyze the unified strategy for a single epoch, which results from combining individual phases---each corresponding to a specific color in the zone graph---into a single execution of Algorithm~\ref{alg:OneAttempt}.

\begin{lemma} \label{lem:colored-zones}
Suppose that the zone graph $\ZG$ admits a $d$-coloring.
Then, Algorithm \ref{alg:OneAttempt} computes an evacuation strategy that takes $4d\B$ steps.
If $\B\geq\OPT$, then the strategy is successful.
If $\B<\OPT$, then the strategy either evacuates an agent or leaves it at its homebase.
\end{lemma}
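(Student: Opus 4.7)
The plan is to verify the three claims of the lemma separately: (i) the $6d\B$ step count, (ii) successful evacuation when $\B\geq\OPT$, and (iii) the invariant that every agent ends either evacuated or at its homebase when $\B<\OPT$.

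For claim (i), I would combine two observations. First, by Lemma~\ref{lem:singlePhaseTime}, the call to \ref{alg:SinglePhase} in line~\ref{ln:OneAttempt:SinglePhase} uses exactly $3\B$ steps. Second, the backtracking in line~\ref{ln:OneAttempt:backtrack} reverses at most $3\B$ forward moves performed during that \ref{alg:SinglePhase} call, so it consumes at most another $3\B$ steps. Summing over the $d$ iterations of the \textbf{for}-loop yields $6d\B$ steps in total.

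For claim (ii), assuming $\B\geq\OPT$, I would fix a color $i\in\{1,\ldots,d\}$ and set $\cZ=c^{-1}(i)$. Since $c$ is a proper coloring of $\ZG$, the zones in $\cZ$ are pairwise non-adjacent, hence independent in the sense required by Lemma~\ref{lem:SinglePhase:S}. Moreover, every zone outside $\cZ$ is idle in this phase because \ref{alg:SinglePhase} is invoked exactly on $\cZ$. The remaining hypothesis on the computability of internal strategies in self-sufficient zones is part of the framework's standing assumption. Lemma~\ref{lem:SinglePhase:S} therefore guarantees that all agents in $\agents(c,i)$ evacuate during iteration $i$. As $c$ assigns each zone exactly one color, every agent belongs to $\agents(c,i)$ for exactly one value of $i$ and is consequently evacuated.

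For claim (iii), assuming $\B<\OPT$, the hypothesis of Lemma~\ref{lem:SinglePhase:S} may fail and some agents in $\agents(c,i)$ may fail to evacuate during a \ref{alg:SinglePhase} call. My argument is that the subsequent backtracking step in line~\ref{ln:OneAttempt:backtrack} retraces each surviving agent's forward trajectory in reverse, while agents that already evacuated are simply absent from the graph. At every reverse step the surviving agents occupy positions they jointly occupied at some earlier forward step, which was collision-free by construction; and the coordination needed to execute the reversal is available via the group that was formed in line~\ref{ln:phase:grouping} (or via the internal confinement of a self-sufficient zone). Hence after each iteration every agent in the current color class is either evacuated or back at its homebase, and this invariant is preserved through later iterations because agents of previously processed colors remain idle while their zones are not passed to \ref{alg:SinglePhase}.

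The main obstacle I anticipate is the formal justification that the backtracking is always executable in the distributed synchronous model, particularly when evacuations have occurred mid-phase and cannot be undone. The key observation I would rely on is that evacuated agents are only removed and never re-inserted, so the set of surviving agents at a given reverse step is a subset of the agents present at the corresponding forward step, whose joint positions were already known to be collision-free; this, together with the skipping mechanism of Section~\ref{sec:preliminaries}, suffices to carry out the reversal without creating new conflicts.
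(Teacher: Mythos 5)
Your proof is correct and follows essentially the same route as the paper's: Lemma~\ref{lem:singlePhaseTime} plus a matching $3\B$ bound on backtracking gives $6d\B$ over the $d$ iterations, the independence of each color class $c^{-1}(i)$ lets Lemma~\ref{lem:SinglePhase:S} handle the case $\B\geq\OPT$, and backtracking accounts for the case $\B<\OPT$. Your extra discussion of why the reversal is executable (evacuated agents are never re-inserted, reverse positions were collision-free in the forward direction) is a reasonable elaboration of a point the paper leaves implicit.
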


\begin{proof}
By Lemma~\ref{lem:singlePhaseTime}, Algorithm \ref{alg:SinglePhase} takes $3\B$ steps each time it is called in line \ref{ln:OneAttempt:SinglePhase} of \ref{alg:OneAttempt}.
Next, the backtracking in line~\ref{ln:OneAttempt:backtrack} takes at most $\B$ steps.
This holds because the time required for backtracking cannot exceed the time agents spend moving from their homebases towards the exits during the exclusive evacuation strategy $\cS$ (determined in line~8 of Algorithm~\ref{alg:SinglePhase}).
Specifically, $\cS$ first backtracks agents from their current locations (resulting from the grouping in line~\ref{ln:phase:grouping}) back to their homebases, and then executes an evacuation phase of duration $\B$.
Altogether, since a single iteration takes $4\B$ steps and there are $d$ calls to Algorithm \ref{alg:SinglePhase}, the time bound of $4d\B$ follows.

Moreover, the backtracking guarantees that if $\B<\OPT$, then each agent still present in the graph\footnote{Note that even if $\B<\OPT$, the evacuation may succeed in such an epoch. This may happen for two reasons. First, the duration of the epoch is $3\B$ and thus may exceed $\OPT$. Second, some agents potentially evacuated in earlier epochs and thus the optimal evacuation time may be smaller than $\OPT$ for the agents currently present in the graph.} at the end of the execution of Algorithm \ref{alg:SinglePhase} in line~\ref{ln:OneAttempt:SinglePhase} returns to its homebase.
Furthermore, because $c$ is a coloring of $\ZG$, all zones in a single color class $c^{-1}(i)$ are independent. Therefore, if $\B\geq\OPT$, then Lemma~\ref{lem:SinglePhase:S} implies that the call to Algorithm \ref{alg:OneAttempt} yields a successful evacuation strategy. 
\end{proof}

% -----------------------------------------------------------------
\subsection{Efficiency of the Framework --- Proof of Theorem \ref{thm:EvacuationTime}}
\label{sec:prf}

In this section, we finally establish the main result regarding the bound on the evacuation time achieved by Algorithm \ref{alg:Evacuation}.

\begin{proof}[Proof of Theorem~\ref{thm:EvacuationTime}]
The execution of Algorithm~\ref{alg:Evacuation} consists of a sequence of epochs $j\in\{1,\dots,p\}$, where $p=\lceil\log_{2}\OPT\rceil$ is the smallest integer such that $2^p\geq\OPT$. 
By Lemma~\ref{lem:colored-zones}, the $j$-th epoch is executed in $4d_j 2^j$ steps, where $d_j$ denotes the chromatic number of the zone graph of a $\B$-partition for $\B=2^j$.
Naturally, for each $j<p$, Lemma~\ref{lem:colored-zones} guarantees that any agent who fails to evacuate in epoch $j$ returns to its initial homebase, ensuring a consistent state for the subsequent epoch.
In particular, by Proposition~\ref{prop:monotonicity}, this ensures that the optimal evacuation time for the instance at the beginning of each epoch is upper bounded by $\OPT$.
Finally, for $j=p$, the lemma guarantees that the evacuation is successful for all remaining agents.
Summing the durations of all epochs yields the final formula for the evacuation time.
\end{proof}

% =====================================================================
\section{Asymptotically Optimal Evacuation in Grids} 
\label{sec:grids}

We now instantiate the framework for two-dimensional grids. An $n \times m$ \emph{grid} $G=(V,E)$ is defined as the Cartesian product of two paths, where the vertex set $V$ consists of integer coordinate pairs $(i,j)$ with $1 \le i \le n$ and $1 \le j \le m$, and the edge set $E$ consists of all sets $\{(i,j), (i',j')\}$ of vertices at unit distance, i.e., $|i-i'| + |j-j'| = 1$. 
In this coordinate system, $(i,j)$ denotes the vertex in the $i$-th column and $j$-th row, while $(1,1)$ and $(n,m)$ represent the lower-left and upper-right corners, respectively.

\medskip
As the central result of this section, the following theorem establishes that our framework guarantees asymptotically optimal evacuation strategies for grids.
\begin{theorem} \label{thm:grids}
There exists an $\bigo(1)$-competitive distributed successful evacuation strategy for each finite $2$-dimensional grid.
\end{theorem}

In order to prove Theorem~\ref{thm:grids}, we must show how to construct a valid $\B$-partition and the corresponding zone graph, and describe appropriate strategies. 
To this end, our approach relies on covering the grid with specific subgraphs called $\B$-areas.
Namely, for any $\B\geq 2$, a \emph{$\B$-area} is a subgraph of a grid $G$ induced by the vertices $(i,j)$ such that $a\B/2 < i \leq(a+1)\B/2$ and $b\B/2 < j \leq(b+1)\B/2$ for some $a,b\geq 0$.
Intuitively, each $\B$-area is a $\B/2\times\B/2$ sized square subgrid with the lower-left vertex $(a\B/2+1,b\B/2+1)$. 
If $m$ or $n$ is not a multiple of $\B/2$, then some peripheral areas of the grid cannot be covered. However, in such a case one can either artificially increase the dimensions of the grid so that they become multiples of $\B/2$ (by adding vertices that are neither homebases nor exits), or alternatively, observe that the analysis can be easily extended to rectangular $\B$-areas.

The core of our approach relies on dividing each individual $\B$-area $H$ into appropriate zones, ultimately allowing us to obtain a $\B$-partition of the whole grid.
Ideally, due to its bounded diameter, an entire $\B$-area $H$ could act as a single zone. However, the locations of exits might hinder forming a group of agents in a sufficiently short time. 
While treating exits as obstacles for group formation might seem counterintuitive, this approach allows us to apply our framework directly as a black-box. More importantly, whenever the exits prevent $H$ from becoming a single zone, this very obstruction allows us to create self-sufficient zones. In such zones, agents can execute simple, uninterrupted evacuation paths directly to the exits.

\medskip
Technically, to formalize this partitioning, we consider a local coordinate system for each $\B$-area $H$, denoting its vertices by $(i,j)$ for $1\leq i,j\leq\B/2$, effectively renaming the indices so that its lower-left vertex is $(1,1)$. 
The partitioning of $H$ depends heavily on the existence of specific paths within this local system.
We say that a path $P\subseteq H$ is \emph{monotone} if it contains no exits and $(1,j)\in V(P)$ and $(\B/2,j')\in V(P)$ for some $j$ and $j'$, and for any two consecutive vertices $(i,z)$, $(i',z')$, we have $i' \geq i$ and $z' \geq z$.

The presence or absence of such a path dictates how we structurally organize the area around the exits to obtain the desired zones, as established by the following two lemmas (an example of such a partition is illustrated in Figure~\ref{fig:monotPath}).

\begin{lemma} \label{lem:no-monotone-path}
If a $\B$-area $H$ does not contain a monotone path, then each row of $H$ is a self-sufficient zone.
\end{lemma}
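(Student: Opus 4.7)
The plan is to prove the contrapositive: assuming that some row of $H$ contains no exit, I will exhibit a monotone path in $H$, contradicting the hypothesis.

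First, I would fix a row $j$ with the property that $(i,j)\notin X$ for every $1\leq i\leq\B/2$. The natural candidate for a monotone path is then the row itself: take $P$ to be the subgraph of $H$ with vertex set $V(P)=\{(i,j)\colon 1\leq i\leq\B/2\}$ and edge set $E(P)=\{\{(i,j),(i+1,j)\}\colon 1\leq i<\B/2\}$. Then $(1,j)$ and $(\B/2,j)$ both lie in $V(P)$, supplying the required vertices with first coordinates $1$ and $\B/2$; moreover, for every non-terminal vertex $(i,j)\in V(P)$ the ``right'' edge $\{(i,j),(i+1,j)\}$ lies in $E(P)$, which matches the ``$\{v,(i+1,z)\}$'' alternative in the definition of a monotone path. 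Hence $P$ is a monotone path in $H$, contradicting the hypothesis.

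Having shown that every row of $H$ contains at least one exit, I would then verify self-sufficiency. A simple internal strategy suffices: within each row, the agents walk toward their nearest in-row exit in a shifting queue, so that at each step the cell adjacent to the exit is occupied by the next agent to leave. Since a row has at most $\B/2$ vertices and at most $\B/2-1$ agents can queue at any single exit, both the longest walking distance and the time needed to funnel all of the row's agents through a single exit are bounded by $\B/2-1$ steps, comfortably within the $\B$-step budget. The strategy is readily computable in a distributed way by the agents initially placed in the row, which certifies that the row is self-sufficient with respect to $\B$.

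The main obstacle I anticipate is the bookkeeping for rows that contain more than one exit: in that case the subgraph induced on the non-exit cells of the row is disconnected, and the row as a whole does not literally satisfy Condition~\ref{it:partition2} of the $\B$-partition definition. I would resolve this by decomposing each such row along its exits into its non-exit connected components, each of which is a sub-path of length less than $\B/2$ adjacent to at least one exit, and is therefore a self-sufficient zone on its own. Once this decomposition is in place, the conclusion of the lemma --- that each row of $H$ yields self-sufficient zones --- follows immediately.
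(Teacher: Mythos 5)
Your proof is correct and follows essentially the same route as the paper's: the contrapositive observation that a row with no exit is itself a monotone path, followed by a funnel-the-agents-to-an-exit internal strategy whose duration is bounded by the row length $\B/2$, hence within the $\B$-step budget. The one place you go beyond the paper is the treatment of rows with interior exits: the paper simply declares each row to be a zone, even though a row whose exit lies strictly between its endpoints has $G[Z\setminus X]$ disconnected, so Condition~\ref{it:partition2} is not literally satisfiable for it. Your fix --- splitting such a row along its exits into non-exit connected sub-paths, each adjacent to an exit and hence self-sufficient on its own --- is exactly the device the paper itself uses for columns in the proof of Lemma~\ref{lem:monotone-path}, and it is harmless for the only downstream use of the lemma (Lemma~\ref{lem:grid-coloring} needs only that each $\B$-area decomposes into self-sufficient zones plus at most one other). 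Just note that this technically weakens the conclusion from ``each row \emph{is} a self-sufficient zone'' to ``each row \emph{decomposes into} self-sufficient zones,'' so if you want to keep the lemma statement verbatim you should either restrict attention to a single designated exit per row (as the paper implicitly does) or flag the restatement explicitly; and when you split a row, remember that the exits themselves must still be assigned to some zone so that Condition~\ref{it:partition1} (the sets covering all of $V$) continues to hold.
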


\begin{proof}
Observe that if $H$ does not contain a monotone path, then this in particular means that for each $j\in\{1,\ldots,\B/2\}$ one of the vertices in the $j$-th row $R_j=\{(1,j),\ldots,(\B/2,j)\}$ is an exit.
Indeed, if none of the vertices in $R_j$ is an exit, then the row fulfills the definition of a monotone path.

We prove that the subgraph, denoted by $Z$, induced by the vertices in $R_j$ is a self-sufficient zone.
This is done by arguing that an appropriate internal strategy for $Z$ exists.
The fact that $Z$ contains an exit $x$ makes the following evacuation strategy successful: each agent in $\agents(Z)$ simply moves towards $x$ and evacuates through $x$ (or potentially through another vertex if more exits are present in $R_j$).
Since $|\agents(Z)|\leq|V(Z)|=|R_j|\leq\B/2$, the duration of this strategy is at most $\B/2$, which completes the proof.
\end{proof}

We point out that checking the existence of a monotone path and computing the strategy implicit in the proof of Lemma~\ref{lem:no-monotone-path} can be performed by each agent based solely on its input to the problem. Since no communication is required, these internal strategies can be executed in at most $\B/2$ steps. 
Furthermore, since the grid topology and exit locations are known, all agents deterministically compute the same monotone path.

\begin{figure}[tbp]
\centering
\includegraphics[width=0.64\textwidth]{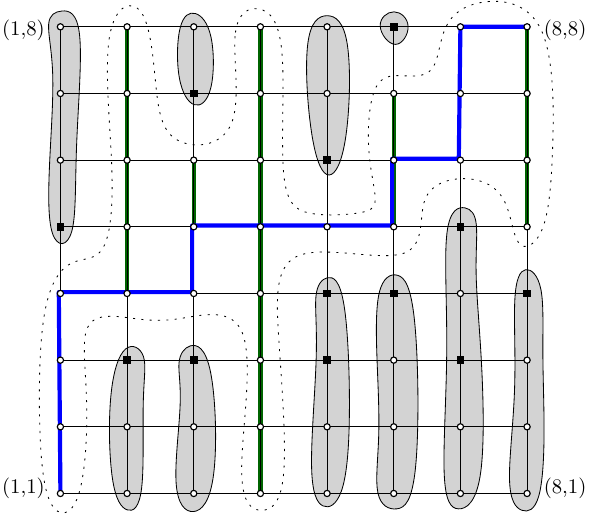}
\caption{An illustration for Lemma~\ref{lem:monotone-path}. A partition of a $\B$-area: a monotone path (blue), the spanning tree $T_H$ (blue+green), the exits (black squares), self-sufficient zones (gray areas), unique non-self-sufficient zone (dotted border area).}
\label{fig:monotPath}
\end{figure}

\begin{lemma} \label{lem:monotone-path}
If a $\B$-area $H$ contains a monotone path, then there exists a $\B$-partition of $H$ with exactly one non-self-sufficient zone.
\end{lemma}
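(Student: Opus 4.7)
The plan is to use the monotone path $P$ as a ``spine'' for a $\B$-partition: take $V(P)$, possibly augmented by some exit-free columns on its sides, as the single non-self-sufficient zone $Z_0$, and partition the rest of $H$ into column-shaped self-sufficient zones in the spirit of Lemma~\ref{lem:no-monotone-path}. The key is a careful choice of $P$ together with a bookkeeping step that handles the two sides of $P$ asymmetrically.

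First, I would fix $P$ to be the \emph{leftmost} monotone path avoiding exits, defined as the one minimizing lexicographically the sequence $c_1\le c_2\le\cdots\le c_{\B/2+1}$ of columns occupied in rows $1,\ldots,\B/2$; by the hypothesis, at least one such path exists. Set $Z_0:=V(P)$. Since $P$ consists of at most $(\B/2-1)+(\B/2-1)=\B-2$ edges, the path itself serves as a spanning tree of $Z_0$ with eccentricity at most $\B-2$ from either endpoint. Because $V(P)\cap X=\emptyset$, $Z_0$ fails self-sufficiency whenever it carries at least one homebase, providing the required non-self-sufficient zone.

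Next I would partition the left region $L=\{(i,j):j<c_i\}$ into column-segments, each of which contains an exit by leftmost-ness of $P$: for any $j<c_1$, the whole column $\{(1,j),\ldots,(\B/2,j)\}$ lies in $L$ and cannot be exit-free, else it would itself be a monotone path starting strictly to the left of $P$; and for every column $j$ at which $P$ takes a right step from $(i'_j,j)$ to $(i'_j,j{+}1)$, the cell $(i'_j{+}1,j)$ just below must be an exit---otherwise $P$ could have descended at column $j$, producing a more-left continuation. Each such column-segment is a path of length at most $\B/2-1$ containing at least one exit, and hence a self-sufficient zone via the obvious strategy that marches all agents down the segment to the exit in at most $\B/2$ steps.

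The main obstacle is the right region $R=\{(i,j):j>c_{i+1}\}$, since leftmost-ness provides no symmetric guarantee: column-segments in $R$ need not contain exits, and the naive absorption $Z_0:=V(P)\cup R$ can disconnect $Z_0\setminus X$ whenever exits inside $R$ isolate cells from $V(P)$. My plan is first to absorb into $Z_0$ every exit-free column of $R$ adjacent to the current $Z_0$, iterating while possible; $Z_0$ remains a staircase subregion of $H$ with diameter at most $\B-2$, so condition~(ii) is preserved. The remaining columns of $R$ are then grouped into maximal consecutive blocks each containing at least one exit---feasible because any exit-free column still in $R$ must be separated from $V(P)$ by a column that contains an exit and can therefore be packaged with that exit column into one self-sufficient zone. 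Each such group is a narrow rectangle of height $\B/2$ whose set minus the exits is connected, admitting an internal strategy that routes agents to the exit column within $O(\B)$ steps. The crux of the proof is this final bookkeeping: verifying that after the absorption and grouping every zone still satisfies condition~(ii) of the $\B$-partition, that each self-sufficient zone evacuates within $\B$ steps, and that $Z_0$ is the only non-self-sufficient zone of the resulting partition.
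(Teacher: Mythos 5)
Your overall plan (spine zone around $P$ plus column-shaped self-sufficient zones) is in the spirit of the paper's proof, but the way you certify exits in the leftover segments has a genuine gap, and it is concentrated exactly where you admit "the crux" lies. First, the concrete false step: it is not true that for a lexicographically leftmost exit-avoiding monotone path, the cell just below a right step must be an exit. Take $\B/2=3$ with $P=(1,1),(1,2),(1,3),(2,3),(3,3)$ and exits at $(2,2),(3,1),(3,2)$: every alternative descent is blocked, so $P$ is leftmost, yet $(2,1)$ below the right step $(1,1)\to(1,2)$ is not an exit. The conclusion you actually need for $L$ (every column-segment of $L$ contains an exit) does hold, but by a different argument: if the segment of column $j$ below the path were exit-free, the path could drop straight down column $j$ from the vertex $(i_0-1,j)$ of $P$ all the way to row $\B/2$, giving a lexicographically smaller path. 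Even then you must split each such segment at its interior exits, since a zone $V_i$ needs $G[V_i\setminus X]$ connected for Condition~(ii). Second, and more seriously, the treatment of $R$ is not a proof: the "maximal consecutive blocks each containing an exit" are not rectangles of height $\B/2$ (they are truncated by the path), their vertex sets minus the exits need not be connected, an internal strategy bounded by $O(\B)$ rather than $\B$ does not establish self-sufficiency, and the claim that absorption preserves Condition~(ii) for $Z_0$ is asserted without checking tree distances (rooting at an endpoint of $P$ can already exceed $\B$ once columns are attached; you need to root near the middle).

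The paper's construction avoids all of this and needs no extremal choice of $P$. Build a spanning tree $T_H$ from $P$ together with all full columns, and let $Z$ be the union, over each column $C_i$, of the \emph{maximal exit-free connected segment of $C_i$ containing the path vertex} --- i.e., grow from $P$ symmetrically up and down within each column until an exit or the boundary is hit. Then every leftover segment $C_i'$ of $C_i\setminus(Z\cup X)$ is automatically bounded by an exit on the side facing $P$ (otherwise it would have been absorbed into $Z$ by maximality), so $C_i'$ plus one such adjacent exit is a self-sufficient zone evacuating in at most $\B/2$ steps, and distinct such zones use distinct exits. This single symmetric maximality argument replaces both your leftmost-ness argument for $L$ and your absorption/grouping scheme for $R$, and it is where your write-up would need to be repaired to become a complete proof.
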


\begin{proof}
We construct a spanning tree $T_H$ of $H$ as follows. First, let $P$ be a monotone path in $H$ and let $P$ belong to $T_H$.
Then, for each column $i\in\{1,\ldots,\B/2\}$, add to $T_H$ the path induced by the vertices $(i,j)$ with $j\in\{1,\ldots,\B/2\}$ (intuitively, we use all ``vertical'' paths, which together with $P$ form a spanning tree $T_H$).

This spanning tree $T_H$ provides the zones contained in $H$ in the following way.
The unique zone that is not self-sufficient, denoted by $Z$, is defined as the maximal subtree of $T_H$ that contains $P$ and has no exits.
Note that such a zone $Z$ fulfills Condition~\ref{it:partition2}, where the required spanning tree is a subtree of $T_H$ with the root in $P$. 
A successful strategy for zone $Z$ is provided by the framework (cf. Lemma~\ref{lem:exclusive-strategy}).

The rest of the $\B$-area is partitioned into self-sufficient zones as follows.
Let $C_i$ denote the $i$-th column of $H$.
Any other maximal subset of $C_{i}\setminus V(Z)$ that induces a connected subgraph of $H$ is defined as a self-sufficient zone (note that due to maximality each exit belongs to some self-sufficient zone).

The construction yields two facts: in accordance with the definition of self-sufficiency, each of our self-sufficient zones has an evacuation strategy of length at most $2\B$, and all these strategies use pairwise distinct exits for evacuation. 
Consequently, they can be executed simultaneously (as in Algorithm~\ref{alg:OneAttempt}). 
\end{proof}

Having established how individual $\B$-areas are partitioned, we now generalize this to the entire grid to bound the chromatic number of the resulting zone graph.

\begin{lemma} \label{lem:grid-coloring}
Let $G$ be a grid.
There exists a $\B$-partition of $G$ such that the corresponding zone graph has a $\Theta(1)$-coloring. 
\end{lemma}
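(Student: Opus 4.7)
The plan is to construct a $\B$-partition by tiling $G$ with the $\B$-areas $H_{a,b}$ (indexed by $a,b\geq 0$) and refining each one via Lemma~\ref{lem:no-monotone-path} or Lemma~\ref{lem:monotone-path}. For each $H_{a,b}$: if it contains no monotone path, I apply Lemma~\ref{lem:no-monotone-path} and make every row a self-sufficient zone; otherwise I apply Lemma~\ref{lem:monotone-path}, producing self-sufficient zones together with exactly one non-self-sufficient zone. Assembling all these zones yields a $\B$-partition of $G$: condition~\eqref{it:partition1} holds because the $\B$-areas tile $V(G)$, while condition~\eqref{it:partition2} is granted by the two lemmas, each furnishing spanning trees of diameter at most $\B$. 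The crucial structural property is that every $\B$-area contributes \emph{at most one} non-self-sufficient zone.

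Next I would prove a geometric separation lemma: if $Z\subseteq H_{a,b}$ and $Z'\subseteq H_{a',b'}$ are close in the sense of~\eqref{eq:dist}, then $|a-a'|\leq 4$ and $|b-b'|\leq 4$. Indeed, for any $(i,j)\in H_{a,b}$ and $(i',j')\in H_{a',b'}$ with $a'>a$ one has $i'-i \geq (a'-a-1)\B/2 + 1$, so $a'-a\geq 5$ forces $|i-i'|\geq 2\B+1$; since $\dist{\cdot}{\cdot}$ in a grid is bounded below by the Manhattan distance (no matter which vertices are exits), this contradicts~\eqref{eq:dist}. The analogous bound for the second coordinate follows symmetrically.

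Finally I would color each $\B$-area by the pair $(a\bmod 5,\,b\bmod 5)$, obtaining a palette of $25$ colors, and give every zone the color of its containing $\B$-area. Two distinct $\B$-areas of the same color differ by a nonzero multiple of $5$ in at least one coordinate, so by the separation lemma no pair of zones inside them can be close. Within a single $\B$-area at most one non-self-sufficient zone is created, and self-sufficient zones are isolated in the zone graph by construction, so they impose no coloring constraint. Hence the assignment is a proper coloring of the zone graph with $25=\Theta(1)$ colors.

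The main obstacle I anticipate is the separation lemma, which hinges on the fact that in a grid graph the exit-avoiding distance $\dist{\cdot}{\cdot}$ is still at least the Manhattan distance; once that is in hand, the modular coloring argument is routine. Truncated $\B$-areas near the boundary of $G$ are a minor technicality, since their vertex sets are subsets of the corresponding full $\B/2\times\B/2$ blocks and the estimate goes through unchanged.
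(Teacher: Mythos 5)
Your proposal is correct and follows essentially the same route as the paper: tile the grid into $\B$-areas, apply Lemmas~\ref{lem:no-monotone-path} and~\ref{lem:monotone-path} so that each area contributes at most one non-self-sufficient zone, observe that closeness forces the areas' indices to differ by a bounded amount (via the Manhattan lower bound on $\dist{\cdot}{\cdot}$), and conclude a constant coloring. The only cosmetic difference is that you color explicitly by $(a\bmod 5,\,b\bmod 5)$ whereas the paper bounds the degree of the zone graph by a constant and colors greedily; incidentally, your separation bound $|a-a'|\leq 4$ is the careful one --- the paper asserts $|a-a'|\leq 2$, which undercounts the close pairs, though this affects only the constant and not the $\Theta(1)$ conclusion.
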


\begin{proof}
For each $\B$-area $H$ of $G$, we apply either Lemma~\ref{lem:no-monotone-path} or~\ref{lem:monotone-path} to obtain a local $\B$-partition of $H$ with at most one non-self-sufficient zone, denoted by $Z(H)$. 
Then, the $\B$-partition of the entire grid $G$ is simply defined as the union of all zones in these local partitions. 

In order to bound the chromatic number of the corresponding zone graph, recall that the vertices representing self-sufficient zones are isolated and hence do not affect the number of colors. Consequently, we focus on non-self-sufficient zones. 
Let $Z(H)$ be such a zone in an arbitrarily selected $\B$-area $H$.
Consider the zones $Z'$ that are close to $Z(H)$. Naturally, for any non-self-sufficient zone in a grid, the number of its neighbors in the zone graph is bounded by a constant that follows from the underlying grid topology. Consequently, the maximum degree $\Delta$ of the zone graph is constant. The lemma follows, since any graph admits a coloring with at most $\Delta+1$ colors.
\end{proof}

The number of colors in the above lemma is bounded by the same constant regardless of the value of $\B$ we select.
Therefore, the number of colors does not change across subsequent epochs in the case of grids.

\begin{proof}[Proof of Theorem~\ref{thm:grids}]
First, we run Algorithm \ref{alg:Evacuation} using $\B$-partitions guaranteed by Lemma~\ref{lem:grid-coloring}.
Hence, for each $\B\in\{2^j\colon j\geq 1\}$, the number of colors used for the zone graph computed in the $j$-th epoch is bounded by a constant $d_j=\Theta(1)$.
By Theorem~\ref{thm:EvacuationTime}, this gives the following bound on the evacuation time
\begin{equation}
    \Theta(1)\cdot\sum_{j=1}^{p}2^j=\Theta(1)\cdot 2^p=\Theta(\OPT),
\end{equation}    
because $p=\lceil\log_2\OPT\rceil$.
\end{proof}

% =======================================================================
\section{Conclusions and Open Problems}
\label{sec:conclusion}

In this paper, we introduced a general algorithmic framework for distributed discrete evacuation on graphs. Relying on the concepts of $\B$-partitions, zone graphs, and phased execution, we established a systematic approach that yields provably correct and competitive strategies. As our main application, we provided an $\bigo(1)$-competitive distributed evacuation strategy for grid graphs. This approach achieves an asymptotically optimal $\Theta(\OPT)$ evacuation time without relying on centralized coordination or global knowledge of the agents' initial configuration.

Our findings open several intriguing directions for future research.

First, while we instantiated our framework for regular grids, the same approach applies to triangular and hexagonal grids.
We mention these topologies as they are frequently used to model programmable matter and robotic swarms.
A natural next step would be an extension of this approach to \emph{partial grids}, i.e., arbitrary subgraphs of grids. Furthermore, adapting our framework to dynamic network topologies, where communication paths may change to reflect blocked evacuation passages, is a highly relevant direction.

From a structural perspective, we proved that strategies restricting to a single spanning tree cannot be constant-competitive in general graphs. However, it remains an open question whether such a spanning tree approach could succeed for bounded-degree graphs. This is particularly interesting given that similar approaches have been successfully applied in related graph-searching problems (see, e.g., Hollinger et al.~\cite{HolKeh10} and Katsilieris et al.~\cite{Katsilieris10}).
Moreover, since evacuation strategies restricted to a spanning tree pre-computed in the first step (see  Proposition~\ref{prop:spanning-tree}) inherently fail to achieve a constant competitive ratio, an interesting open question is whether it also holds if we relax this assumption. For instance, in an alternative approach, agents might postpone the decision on the spanning tree selection, dynamically ensuring only that all traversed edges eventually form an acyclic subgraph. Analyzing such adaptive tree-based strategies remains challenging since, for certain input instances, initial routing decisions must be made before agents can communicate, even indirectly. 

Regarding the underlying model, we assumed the homogeneity of the agents. An intriguing open question is what can be achieved in the distributed setting by heterogeneous agents. Furthermore, it is worth studying the computability of evacuation strategies by agents with weaker capabilities, such as limited computational power, bounded memory size, asynchronous movements, or restricted communication bandwidth.

Finally, the structural abstraction of $\B$-partitions, zone graphs, and phased execution does not seem specific to evacuation alone. Whether the framework developed here can be applied to other distributed agent coordination problems remains a compelling direction for future work.

% ===============================================================

\bibliographystyle{plainurl}
\bibliography{evac-base}

\end{document}